\begin{document}
\Logo{BCRI--CGC--preprint, http://www.bcri.ucc.ie}

\begin{frontmatter}

\title{On the provable security of BEAR and LION schemes}

{\author{Lara Maines}} {\tt{(lara.maines@gmail.com)}}\\
 {{Department of Mathematics, University of Trento, Italy}}

{\author{Matteo Piva}}
{\tt{(matteo.piva@unitn.it) }}\\
{{Department of Mathematics,
 University of Trento, Italy }}

{\author{Anna Rimoldi}}
{\tt{(anna.rimoldi@gmail.com) }}\\
{eRISCS, Universite de la M\'{e}diterran\'{e}e, Marseille, France}

{\author{Massimiliano Sala}} {\tt{(maxsalacodes@gmail.com)}}\\
{ Department of Mathematics, University of Trento, Italy}

\runauthor{L.~Maines, M.~Piva, A.~Rimoldi, M.~Sala}

\begin{abstract}
BEAR, LION and LIONESS are block ciphers presented by Biham and
Anderson (1996), inspired by the famous Luby-Rackoff constructions of
block ciphers from other cryptographic primitives (1988). 
The ciphers proposed by Biham and Anderson are based on
one stream cipher and one hash function. Good properties of
the primitives ensure  good properties of the block cipher.
In particular, they are able to prove that their ciphers are immune to
any efficient known-plaintext key-recovery attack that can use as
input only one plaintext-ciphertext pair.
Our contribution is showing that these ciphers are actually immune to any
efficient known-plaintext key-recovery attack that can use as input
any number of plaintext-ciphertext pairs. We are able to get this
improvement by using slightly weaker hypotheses on the primitives.
We also discuss the attack by Morin (1996).
\end{abstract}

\begin{keyword}
Cryptography, block cipher, stream cipher, hash function, BEAR, LION,
Luby-Rackoff cipher.
\end{keyword}
\end{frontmatter}

\section*{Introduction}
\label{intro}

In this paper we discuss three block ciphers, BEAR, LION and LIONESS, proposed in \cite{CGC-cry-art-AndersonBiham96} by Anderson and Biham, 
whose construction depends on one stream cipher and one hash function. 
These block ciphers are inspired by \cite{CGC-cry-art-LubyRackoff88} and 
present a three-round (for BEAR and LION) or four-round (for LIONESS) Feistel construction.   
In particular, we treat the provable security shown by them and 
provide some improvements.

In Section \ref{prel} we give some preliminaries, recalling in particular
BEAR's construction (Subsection \ref{prel_BEAR}) with the results by Anderson
and Biham, Th.~\ref{th_stream_anderson_biham} and Th.~\ref{th_hash_anderson_biham}, 
that ensure the non-existence of efficient attacks (of a very specific kind)
on BEAR if at least one of the two primitives is robust.
In this section we also recall LION's construction
(Subsection \ref{prel_LION}) and their
claimed results on LION,  Th.~\ref{th_stream_LION_anderson_biham}
and Th.~\ref{th_hash_LION_anderson_biham}, on the non-existence of similar
attacks.
We provide our proof for them, with slightly
weaker hypotheses.
 This preliminary section is concluded by a description
of LIONESS (Subsection~\ref{prel_LIONESS}).\\
In Section \ref{results} we give our results on BEAR, LION and LIONESS,
that show the non-existence of some more general attacks.
We also introduce two slight variations, BEAR2 and LION2, of BEAR
and LION, respectively.
We identify the hypotheses on the primitives that we need,
in particular highlighting the relation between key and hash function
in the keyed hash function.
In Subsection \ref{res_BEAR} we provide Th.~\ref{th_hash_bear}
 that improves 
Th.~\ref{th_hash_anderson_biham}, and Th.~\ref{th_BEAR_stream} on BEAR2, 
that improves Th.~\ref{th_stream_anderson_biham}.
In Subsection~\ref{res_LION} we provide Th.~\ref{th_lion_stream} that improves 
Th.~\ref{th_stream_LION_anderson_biham}, and Th.~\ref{th_lion2_stream} on LION2,
that extends and improves Th.~\ref{th_hash_LION_anderson_biham}.
Finally, in Subsection~\ref{res_LIONESS} 
we extend our results to LIONESS
in Th.~\ref{th_stream_lioness} and Th.~\ref{th_hash_lioness}.\\
In Section \ref{conc}, we discuss our results and draw our conclusions.
We also put in context the attack to BEAR and LION by Morin 
(\cite{CGC-cry-art-Morin96}).

\section{Preliminaries}
\label{prel}

 We use $\FF$ to denote $\mathbb{F}_2$ and typically when a capital $R$ or a capital $L$ appear, they mean elements of $\FF^r$ and $\FF^l$ respectively, with $r>l$. The encrypted/decrypted messages are of kind $(L_i,R_i)\in\FF^{l+r}$. The key space is denoted by $\mathcal{K}$.
Usually the key $K=(K_1,K_2)$ is composed of two subkeys, each of length greater than $l$, so $\mathcal{K}=\FF^k\times\FF^k$, $k\geq l$.

In this paper we  consider oracles able to recover the key using as input 
only a set of known plaintexts/ciphertexts.
We call ``single-pair'' any oracle so strong as to need only one pair
and ``multi-pair'' any other.

\subsection{Preliminaries on BEAR}
\label{prel_BEAR}
 The description of BEAR encryption/decryption is based on a keyed hash function $H_K$ and a stream cipher $S$ with the following properties.
\begin{enumerate}

 \item The keyed hash function $H_K(M)$
\begin{itemize}
 \item[(a)] is based on an unkeyed hash function $H'(M)$, in which we append and/or prepend the key to the message;
 \item[(b)]\label{1b} is one-way and collision-free, i.e. it is hard given $Y$ to find $X$ such that $H'(X)=Y$, and to find unequal $X$ and $Y$ such that $H'(X)=H'(Y)$;
 \item[(c)] is pseudo-random, in that even given $H'(X_i)$ for any set of inputs, it is hard to predict any bit of $H'(Y)$ for a new input $Y$. 
\end{itemize}
\item The stream cipher $S(M)$:
\begin{itemize}
 \item[($0$)] is pseudo-random 
(this condition is assumed but not listed in \cite{CGC-cry-art-AndersonBiham96});
 \item[(a)]\label{2a} resists key recovery attacks, in that it is hard to find the seed $X$ given $Y=S(X)$;
 \item[(b)] resists expansion attacks, in that it is hard to expand any partial stream of $Y$.
\end{itemize}
\end{enumerate}
\noindent
We note that conditions (1)-c and (2)-$0$ ensure respectively that $H_K$ and $S$ are pseudo-random, in order to obtain security against some distinguishing attacks in
a rather theoretical model 
(\cite{CGC-cry-art-LubyRackoff88} and \cite{CGC-cry-art-Lucks96}).

We recall the BEAR encryption/decryption scheme (here $k>l$).
\begin{center}
\begin{tabular}{|l|l|}
\hline
 ENCRYPTION 				& 	DECRYPTION		\\		
\hline					
$\overline{L}= L + H_{K_1}(R)$		&	$\overline{L}=L' + H_{K_2}(R')$ 	\\
$R'=R + S(\overline{L})$		&	$R=R' + S(\overline{L})$	\\
$L'= \overline{L} + H_{K_2}(R')$	&	$L = \overline{L} + H_{K_1}(R)$		\\
\hline
\end{tabular}
\end{center}
In \cite{CGC-cry-art-AndersonBiham96} Anderson and Biham claim 
the following results on one-pair oracles.
\begin{theorem}[Th. 1 of \cite{CGC-cry-art-AndersonBiham96}]\label{th_stream_anderson_biham}
An oracle which finds the key of BEAR, given one plaintext/ciphertext pair, can efficiently and with high probability find the seed $M$ of the stream cipher $S$ for any output $Y=S(M)$.
\end{theorem}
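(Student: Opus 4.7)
My plan is a reduction: given the hypothesized BEAR key-recovery oracle $\mathcal{O}$, I construct an algorithm inverting $S$. The idea is to embed the challenge $Y$ into a candidate BEAR plaintext/ciphertext pair so that the internal intermediate value $\overline{L}$ is forced to be a preimage of $Y$; the key returned by $\mathcal{O}$ then exposes this preimage.

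Concretely, given $Y=S(M)$, I would sample $L\in\FF^l$, $R\in\FF^r$ and $L'\in\FF^l$ uniformly at random, set $R':=R+Y$, and submit the pair $\bigl((L,R),(L',R')\bigr)$ to $\mathcal{O}$. If $\mathcal{O}$ returns $(K_1,K_2)$, I output $M':=L+H_{K_1}(R)$. Correctness is immediate from BEAR's equations: whenever $(K_1,K_2)$ is a key for which $(L',R')=\mathrm{BEAR}_{(K_1,K_2)}(L,R)$, the intermediate value must equal $\overline{L}=L+H_{K_1}(R)=M'$, and the encryption relation $R'=R+S(\overline{L})$ forces $S(M')=R'+R=Y$, so $M'$ is a valid seed.

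The main obstacle is transferring the oracle's success guarantee from its natural input distribution to our constructed query. By the pseudo-randomness hypotheses (1)-c and (2)-0, a genuine BEAR pair $(P,\mathrm{BEAR}_K(P))$ with uniformly random $K$ and $P$ has distribution statistically close to uniform on $\FF^{2(l+r)}$. Our query is uniform on $\FF^{2(l+r)}$ conditioned on the single linear constraint $R'+R=Y$, and because $Y$ is itself the pseudo-random image $S(M)$, the joint distribution of our query is close to that of a genuine pair. Moreover, since $k>l$, for a fixed such pair the expected number of BEAR keys realizing it is roughly $2^{2k-2l}$, so with overwhelming probability our query is a genuine BEAR pair for many keys. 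Hence $\mathcal{O}$ succeeds on our query with essentially the same probability as on a random genuine pair, and the reduction runs efficiently and succeeds with high probability.
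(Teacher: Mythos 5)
Your construction is exactly the intended one (compare Remark~\ref{Hrules} and the analogous argument proving Theorem~\ref{th_stream_LION_anderson_biham}): pick random $L$, $R$, $L'$, set $R'=R+Y$, query the oracle on $\bigl((L,R),(L',R')\bigr)$, and output $M'=L+H_{K_1}(R)$; your verification that any consistent key forces $S(M')=R+R'=Y$ is also the intended one.

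The gap is in your final paragraph, where you argue that the oracle must succeed on the constructed query. The claim that a genuine pair $(P,\mathrm{BEAR}_K(P))$ with random $K,P$ is statistically close to uniform on $\FF^{2(l+r)}$ is false: for a fixed plaintext the ciphertext ranges over at most $2^{2k}$ values out of $2^{l+r}$, with $r$ large, so the distribution of genuine pairs is statistically very far from uniform; pseudo-randomness only gives computational indistinguishability, and that cannot be invoked against the oracle, which (see Remark~\ref{eff}) may be computationally unbounded. Likewise, the observation that the expected number of keys consistent with your query is about $2^{2k-2l}$ does not by itself imply that with high probability at least one consistent key exists: a large mean is compatible with the count being zero most of the time, absent an independence or concentration argument you do not supply. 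What actually closes this step --- and what the paper isolates in Remark~\ref{Hrules} as the \emph{only} hypothesis needed --- is that for most $R$ the map $K\mapsto H_K(R)$ is surjective onto $\FF^l$. Then there exists $K_1$ with $H_{K_1}(R)=L+M$, so the first-round value is $\overline{L}=M$ and $R'=R+Y$ is the true second-round output, and there exists $K_2$ with $H_{K_2}(R')=M+L'$, so your arbitrarily chosen $L'$ is also realized. Hence the query is \emph{literally} a valid plaintext/ciphertext pair for some key, the oracle's guarantee applies directly (with high probability over the choice of $R$, $R'$), and no hypothesis on $S$ and no distributional comparison are needed at all.
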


\begin{theorem}[Th. 2 of \cite{CGC-cry-art-AndersonBiham96}]\label{th_hash_anderson_biham}
An oracle which finds the key of BEAR, given one plaintext/ciphertext pair, can efficiently and with high probability find preimages and collisions of the hash function $H$.
\end{theorem}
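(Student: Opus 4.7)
The decisive fact, which we exploit throughout, is property (1)(a): the keyed hash $H_{K}(M)$ is obtained by feeding the unkeyed $H'$ a string that combines $K$ and $M$. Consequently, producing a preimage or collision in the keyed setting---a pair $(K, M)$ with $H_{K}(M) = Y$, or a pair of distinct $(K, M), (K', M')$ with $H_{K}(M) = H_{K'}(M')$---automatically yields the corresponding object for $H'$. The plan is therefore to use the hypothetical key-recovery oracle to manufacture precisely such keyed preimages and collisions.

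For the preimage reduction, given a target $Y \in \FF^{l}$, I would draw $R$, $\overline{L}$, $K_{2}$ uniformly at random and set $L := \overline{L} + Y$, $R' := R + S(\overline{L})$, and $L' := \overline{L} + H_{K_{2}}(R')$. By construction $((L,R),(L',R'))$ is a valid BEAR plaintext/ciphertext pair under any key $(\tilde{K}_{1}, \tilde{K}_{2})$ satisfying $H_{\tilde{K}_{1}}(R) = Y$ and $H_{\tilde{K}_{2}}(R') = L' + \overline{L}$; the second equality already holds for $\tilde{K}_{2} = K_{2}$, and the pseudo-randomness property (1)(c), combined with $k \geq l$, guarantees that the map $K \mapsto H_{K}(R)$ attains $Y$ for some key with overwhelming probability. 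Feeding the pair to the oracle returns such a $\tilde{K}_{1}$, and (1)(a) immediately converts it into a preimage of $Y$ under $H'$.

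For the collision reduction, I would draw $K = (K_{1}, K_{2})$ and a plaintext $(L, R)$ at random, honestly encrypt to obtain $(L', R')$, and call the oracle. It returns some $(\tilde{K}_{1}, \tilde{K}_{2})$ consistent with this pair. Matching the two BEAR encryption equations yields $S(\overline{L}) = S(\tilde{\overline{L}})$, hence (by the pseudo-randomness of $S$) $\overline{L} = \tilde{\overline{L}}$, and therefore $H_{\tilde{K}_{1}}(R) = H_{K_{1}}(R)$ and $H_{\tilde{K}_{2}}(R') = H_{K_{2}}(R')$. Because a typical valid pair admits on the order of $2^{2(k-l)}$ consistent keys, no efficient oracle can pin down our specific $(K_{1}, K_{2})$; with constant probability at least one coordinate differs, and via (1)(a) that difference lifts to a genuine collision of $H'$.

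The principal technical obstacle is the quantitative probabilistic analysis underlying both reductions: one must control the size of the fibres of $K \mapsto H_{K}(R)$, so that a valid $\tilde{K}_{1}$ exists in the preimage case and so that multiple consistent keys exist in the collision case. This is exactly where the hypothesis $k \geq l$ together with the pseudo-randomness assumption (1)(c) are essential; in the marginal regime $k \approx l$ the per-call success probability may be small, but standard amplification via $O(1)$ independent repetitions then raises the overall success probability to overwhelming.
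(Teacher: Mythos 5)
Your reduction is essentially the one this theorem rests on (which the paper does not reprove but summarizes in Remark~\ref{Hrules}): for a preimage, manufacture a plaintext/ciphertext pair consistent only with keys satisfying $H_{\tilde{K}_1}(R)=Y$, and for a collision, encrypt under a random known key and use the multiplicity of consistent keys (guaranteed by $k>l$) so that the oracle's answer differs from yours, lifting to $H'$ via property (1)(a). The only point worth noting is that you invoke the full pseudo-randomness assumption (1)(c), whereas the paper stresses that the proof needs only the weaker property that $K\mapsto H_K(R)$ is surjective for most $R$, and no hypothesis on $S$.
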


\begin{remark}
\label{Hrules}
We observe that while proving Th. \ref{th_stream_anderson_biham} and Th. \ref{th_hash_anderson_biham} they {\bf only} need the following assumption on $H$:
\begin{center}
for most $R$'s the map $H^R:\FF^k\mapsto\FF^l$, $H^R(K)=H_K(R)$, is surjective.
\end{center}
This assumption is implied by the pseudo-randomness of the unkeyed
hash function $H'$ (and the bigger dimension of the key space), but
it is not equivalent to it. Indeed, it is easy to construct even
linear functions satisfying it.\\
On the other hand, {\bf no} hypothesis on the stream cipher $S$ is used.
\end{remark}

\begin{remark}\label{eff}
The word {\tt efficiently} in Th. \ref{th_stream_anderson_biham} and 
Th. \ref{th_hash_anderson_biham} might be confusing.
The oracle could need huge resources to work. A trivial example
is given by a brute force search of all keys. What Biham and Anderson
mean is that the attacker will need little computational effort
{\tt in addition to} any effort done by the oracle itself,
whatever large.
\end{remark}

As a direct consequence of Th.  \ref{th_stream_anderson_biham}
and Th. \ref{th_hash_anderson_biham} we have:
\begin{corollary}\label{cor_BEAR}
If it is impossible to find efficiently the seed of $S$ or 
it is impossible to find efficiently  preimages and
collisions of H, then no efficient (key-recovery) single-pair attack
exists for BEAR.
\end{corollary}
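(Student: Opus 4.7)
The plan is to prove the corollary by contrapositive, reducing it immediately to the two preceding theorems. Concretely, I will suppose that there does exist an efficient single-pair key-recovery attack on BEAR, that is, an oracle $\mathcal{O}$ which, given a single plaintext/ciphertext pair $((L,R),(L',R'))$, returns the key $K=(K_1,K_2)$ with non-negligible probability and with small computational overhead (in the sense clarified in Remark \ref{eff}). My goal will then be to derive that both the seed of $S$ and preimages/collisions of $H$ can be computed efficiently, contradicting the hypothesis which is a disjunction of these two impossibilities.

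First, I would feed $\mathcal{O}$ into the reduction provided by Theorem \ref{th_stream_anderson_biham}. By that theorem, any such $\mathcal{O}$ yields an efficient procedure that, given an arbitrary output $Y=S(M)$ of the stream cipher, recovers the seed $M$ with high probability. This already contradicts the first disjunct of the hypothesis, namely the assumption that no efficient seed-recovery for $S$ exists. In parallel, I would apply Theorem \ref{th_hash_anderson_biham} to the same oracle $\mathcal{O}$: that theorem produces from $\mathcal{O}$ an efficient procedure for finding preimages and collisions of the (unkeyed) hash function $H$, contradicting the second disjunct.

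Since the hypothesis of the corollary asserts that at least one of these two disjuncts holds, either contradiction already suffices. Combining the two reductions, the assumption ``there is an efficient single-pair key-recovery attack on BEAR'' is incompatible with the hypothesis, and the corollary follows at once.

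The main point to emphasize is that there is essentially no obstacle here: the corollary is a logical repackaging of Theorems \ref{th_stream_anderson_biham} and \ref{th_hash_anderson_biham} via contraposition, and the only substantive work has already been carried out in their proofs. The one subtlety worth flagging explicitly is the meaning of \emph{efficient} discussed in Remark \ref{eff}: the reductions preserve efficiency in the relative sense, so the efficiency of the hypothesized BEAR attacker transfers without loss to the derived attackers on $S$ and $H$, which is exactly what is needed to close the argument.
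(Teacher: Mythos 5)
Your argument is correct and is exactly the intended one: the paper states this corollary as a direct consequence of Theorems \ref{th_stream_anderson_biham} and \ref{th_hash_anderson_biham}, i.e.\ the same contrapositive combination you spell out, and gives no further proof. Your remark on the relative notion of efficiency (Remark \ref{eff}) is the right caveat and matches the paper's reading.
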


\subsection{Preliminaries on LION}
\label{prel_LION}

LION is quite similar to BEAR except that it uses the stream cipher twice and 
the hash function only once. For LION, $\mathcal{K}=\FF^{2l}$.\\
The requests are:
\begin{enumerate}
 \item The hash function $H(M)$:
 \begin{itemize}
  \item[(b)] is one-way and collision-free, i.e. it is hard given $Y$ to find $X$ such that $H'(X)=Y$, and to find unequal $X$ and $Y$ such that $H'(X)=H'(Y)$;
 \end{itemize}
 \item The stream cipher $S(M)$:
 \begin{itemize}
  \item[($0$)] is pseudo-random,
  \item[(a)] resists key recovery attacks, in that it is hard to find the seed $X$ given $Y=S(X)$;
  \item[(b)] resists expansion attacks, in that it is hard to expand any partial stream of $Y$.
 \end{itemize}
\end{enumerate}
\noindent
We note that Anderson and Biham here dropped 1-(a) and 1-(c).

We recall the LION encryption/decryption scheme (here $k=l$).
\begin{center}
\begin{tabular}{|l|l|}
\hline
 ENCRYPTION 				& 	DECRYPTION		\\
\hline											
$\overline{R}= R + S(L+K_1)$		&	$\overline{R}= R' + S(L'+K_2)$ \\
$L'=L+H(\overline{R})$			&	$L=L'+H(\overline{R})$	\\
$R'= \overline{R} + S(L'+K_2)$		&	$R= \overline{R} + S(L+K_1)$	\\
\hline
\end{tabular}
\end{center}

Results similar to Theorem \ref{th_stream_anderson_biham} and Theorem \ref{th_hash_anderson_biham} are claimed also for LION but without proof nor precise
statement. They write {\em the security reduction of LION proceeds similarly
to that of BEAR; an oracle which yields the key of LION will break both
its components.}.
Unfortunately, we have not been able to write down direct adaptions 
of the previous proofs, especially because here the property of $H$ 
as in Remark~\ref{Hrules} cannot be used and 
we do not see how one could get something similar
for the stream cipher.
Therefore, we now state precisely their claims, giving the weakest hypotheses 
we can exhibit.

\begin{theorem}
\label{th_stream_LION_anderson_biham}
Assume nothing on $H$ and $S$, except that they are set functions
$H:\FF^r\rightarrow\FF^l$ and $S:\FF^l\rightarrow\FF^r$.
An oracle $\mathcal{A}_1$ which finds the key of LION, given one plaintext/ciphertext pair, can efficiently and with high probability find the seed $M$ of the stream cipher $S$ for any particular output $Y=S(M)$.
\begin{proof}
 Let us choose a random input $(L,R)$. Let $K_1=M+L$. Then $S(L+K_1)=Y$.
  We can compute $\overline{R}=R+Y$, $L'=L+H(R+Y)$ and, by choosing any $K_2$, $R'=R+Y+S(L'+K_2)$. Then we give in input to the oracle the pair $\Set{(L,R),(L',R')}$ and $\mathcal{A}_1$ returns $(K_1,K_2)$, so we can immediately compute $M=L+K_1$. 
\end{proof}
\end{theorem}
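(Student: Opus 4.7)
The plan is to reduce key-recovery for LION to seed-recovery for $S$ by forging a plaintext/ciphertext pair whose first subkey secretly encodes the unknown seed $M$. The handle is the very first line of LION encryption, $\overline{R} = R + S(L+K_1)$: if I can arrange the implicit commitment $L+K_1 = M$, then $S(L+K_1)$ is exactly the given challenge $Y = S(M)$, which I know even though I do not know $M$. Under that commitment, once the oracle returns $K_1$ I immediately read off $M = L + K_1$.

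Concretely, I pick $L \in \FF^l$ and $R \in \FF^r$ uniformly at random and implicitly set $K_1 := M + L$ (which I cannot compute, but never need to). I then set $\overline{R} := R + Y$, so that $\overline{R}$ agrees with $R + S(L+K_1)$ under the implicit commitment. Using public evaluation of $H$, I compute $L' := L + H(\overline{R})$. I then pick any $K_2 \in \FF^l$ and, using public evaluation of $S$, set $R' := \overline{R} + S(L'+K_2)$. By construction, $((L,R),(L',R'))$ is a genuine LION plaintext/ciphertext pair under the key $(M+L, K_2)$. I submit this pair to $\mathcal{A}_1$, read off the first coordinate of the returned key, and output $L$ added to it.

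The delicate point I anticipate is the uniqueness of the recovered key: since the hypotheses assume nothing on $H$ and $S$ beyond being set functions, several distinct keys may be consistent with a single pair, and $\mathcal{A}_1$ could in principle return a valid but ``wrong'' key whose first coordinate differs from $M+L$. This is exactly why the conclusion is phrased probabilistically rather than with certainty; the distribution I simulate on $(L,R,K_2)$ is uniform, which matches the distribution the oracle expects, so the success probability matches that of $\mathcal{A}_1$. The extra work outside the oracle call is a constant number of evaluations of $H$ and $S$, hence efficient in the sense of Remark~\ref{eff}.
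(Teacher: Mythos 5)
Your construction is exactly the one the paper uses: pick random $(L,R)$, implicitly set $K_1=M+L$ so that $S(L+K_1)=Y$, build $\overline{R}=R+Y$, $L'=L+H(R+Y)$, $R'=\overline{R}+S(L'+K_2)$ for a chosen $K_2$, feed the pair to $\mathcal{A}_1$, and read off $M=L+K_1$. The proposal is correct and follows the paper's proof essentially verbatim, with your remarks on key non-uniqueness and simulated distribution being a reasonable gloss on why the conclusion is stated ``with high probability.''
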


To prove a similar theorem for the hash function, 
we need the following definition.
\begin{definition}
\label{def_good_pairing}
Let $H$ and $S$ be functions, $H:\FF^r\rightarrow\FF^l$ and $S:\FF^l\rightarrow\FF^r$, with $r\geq l$. We say that $(S,H)$ is a \emph{good pairing} if 
 for a random $Y\in \FF^l$ we have
 $H^{-1}(Y)\cap \mathrm{Im}(S) \neq\emptyset$.
\end{definition}
\noindent
We note that if at least one between $H$ or $S$ is pseudo-random, then
$(S,H)$ is a good pairing. However, we might have a good pairing 
even if none of the primitives is pseudo-random.
\\

We are ready for our interpretation of their claim on the link
between the security of LION and of the hash function.
\begin{theorem}\label{th_hash_LION_anderson_biham}
Assume that $(S,H)$ is a good pairing. An oracle $\mathcal{A}_1$ which finds the key of LION, given one plaintext/ciphertext pair, can efficiently and with high probability find preimages and collisions of the hash function $H$.
\begin{proof}
Since $r>l$ we can choose $\tilde{R}\notin\mathrm{Im}(S)$ with probability $\frac{2^r-2^l}{2^r}$ and calculate $H(\tilde{R})=\tilde{Y}\in\FF^{l}$. We can suppose $H^{-1}(\tilde{Y})\cap\mathrm{Im}(S)\neq \emptyset$ (else we can choose another $\tilde{R}$) and so there is an $X\in H^{-1}(\tilde{Y})\cap\mathrm{Im}(S)$. We consider as plaintext $(L,0)$ , where $L$ is any element of $\FF^l$ and $0\in\FF^r$. There exists $K_1$ such that $\overline{R}=S(L+K_1)=X$, because $X\in\mathrm{Im}(S)$. Thus $L'=L+H(\overline{R})=L+H(X)=L+\tilde{Y}$, because $X\in H^{-1}(\tilde{Y})$. It follows that for $K_2=L+\tilde{Y}+X$ we have $R'=X+S(L'+K_2)=X+X=0$. We give to $\mathcal{A}_1$ as input the pair $\Set{(L,0),(L+\tilde{Y},0)}$ and it returns $(K_1,K_2)$, so we can compute easily $X=S(L+K_1)$, finding a collision $H(\tilde{R})=H(X)=\tilde{Y}$. Note that $\tilde{R}\neq X$, since $\tilde{R}\notin\mathrm{Im}(S)$ and $X\in\mathrm{Im}(S)$.

To find a preimage, argue as above but with an arbitrary $Y\in\FF^l$.    
\end{proof}
\end{theorem}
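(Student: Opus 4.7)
The plan is to craft, for any target $\tilde{Y}\in\FF^l$, a plaintext--ciphertext pair valid under LION whose recovered key exposes a preimage of $\tilde{Y}$ under $H$. The unifying idea is to zero the right halves of both plaintext and ciphertext: if $R=R'=0$, then the first and third lines of the LION schedule force $\overline{R}=S(L+K_1)=S(L'+K_2)$, so $\overline{R}$ must lie in $\mathrm{Im}(S)$. The middle line $L'=L+H(\overline{R})$ then reduces the consistency of such a pair to the existence of some $X\in H^{-1}(\tilde{Y})\cap\mathrm{Im}(S)$, where $\tilde{Y}=L'-L$; this is exactly the content of the good pairing hypothesis.

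Concretely, I would fix a random $L$ and submit the pair $\Set{(L,0),(L+\tilde{Y},0)}$ to $\mathcal{A}_1$. To argue that this really is a LION plaintext--ciphertext pair under some key, I would pick $X\in H^{-1}(\tilde{Y})\cap\mathrm{Im}(S)$ (supplied by the good pairing assumption, retrying $\tilde{Y}$ if necessary), choose $K_1$ so that $S(L+K_1)=X$, check that $\overline{R}=X$ and hence $L'=L+H(X)=L+\tilde{Y}$ come out as required, and finally choose $K_2$ with $S(L+\tilde{Y}+K_2)=X$, which yields $R'=0$. Once $\mathcal{A}_1$ returns $(K_1,K_2)$, the attacker simply computes $X=S(L+K_1)$ and reads off a preimage of $\tilde{Y}$.

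To turn preimages into collisions I would first sample $\tilde{R}\in\FF^r$ at random and set $\tilde{Y}=H(\tilde{R})$, observing that $\tilde{R}\notin\mathrm{Im}(S)$ with probability at least $1-2^{l-r}$ since $|\mathrm{Im}(S)|\leq 2^l<2^r$. Running the preimage procedure on this $\tilde{Y}$ yields $X\in\mathrm{Im}(S)$ with $H(X)=\tilde{Y}=H(\tilde{R})$, and the separation $\tilde{R}\notin\mathrm{Im}(S)\ni X$ forces $\tilde{R}\neq X$, producing a genuine collision. The step I expect to require the most care is checking that the intersection $H^{-1}(\tilde{Y})\cap\mathrm{Im}(S)$ is non-empty for the specific values of $\tilde{Y}$ actually queried (namely $H(\tilde{R})$ for $\tilde{R}\notin\mathrm{Im}(S)$ in the collision case, and an arbitrary target in the preimage case); but the ``good pairing'' property is formulated exactly so that a random $\tilde{Y}$ passes this test, so the difficulty is absorbed into the hypothesis and the remaining work is routine verification of the LION encryption equations.
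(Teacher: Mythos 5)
Your proposal is correct and follows essentially the same route as the paper's proof: the same pair $\Set{(L,0),(L+\tilde{Y},0)}$, the same choices of $K_1$ and $K_2$ via $X\in H^{-1}(\tilde{Y})\cap\mathrm{Im}(S)$ from the good pairing, and the same $\tilde{R}\notin\mathrm{Im}(S)$ separation for collisions. The only differences are cosmetic: you present the preimage case first and state the condition on $K_2$ as $S(L+\tilde{Y}+K_2)=X$, which is in fact a cleaner formulation than the paper's.
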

The same considerations as in Remark \ref{eff} hold and a corollary
analogous to Cor. \ref{cor_BEAR} holds.

\subsection{Preliminaries on LIONESS}
\label{prel_LIONESS}

The third block cipher proposed in \cite{CGC-cry-art-AndersonBiham96} is LIONESS, 
which consists of four rounds and uses four independent keys, 
$K_1$, $K_3\in\FF^l$, $K_2$, $K_4\in\FF^k$, 
so $\mathcal{K}=\FF^l\times\FF^{k}\times\FF^{l}\times\FF^{k}$, for some $k$.

Anderson and Biham do not give explicit statements on LIONESS's security, but it is obvious from its construction
that any provable-security result for LION and/or BEAR directly extends to LIONESS, because any oracle
attacking LIONESS will be able to attack LION and BEAR, with possibly even less effort.

\pagebreak

\begin{center}
LIONESS \\
\begin{tabular}{|l|l|}
\hline
 ENCRYPTION 				 & 	DECRYPTION	\\
\hline
$\overline{R}=R+S(L+K_1)$		 & $\overline{L}=L'+H_{K_4}(R')$		\\
$\overline{L}= L + H_{K_2}(\overline{R})$ & $\overline{R}=R'+S(\overline{L}+K_{3})$\\
$R'=R + S(\overline{L}+K_3)$		 & $L = \overline{L} + H_{K_2}(\overline{R})$\\
$L'= \overline{L} + H_{K_4}(R')$		 &	$R=\overline{R}+S(L+K_1)$	\\
\hline							
\end{tabular}
\end{center}

For completeness, we can state the following obvious corollary.
\begin{corollary}
\label{th_LIONESS_anderson_biham}
Assume nothing on $H$ and $S$, except that they are set functions
$H:\FF^r\rightarrow\FF^l$ and $S:\FF^l\rightarrow\FF^r$ and that
for most $R$'s the map $H^R:\FF^k\mapsto\FF^l$, $H^R(K)=H_K(R)$, is surjective.
Then an oracle $\mathcal{A}_1$ which finds the key of LIONESS, given one plaintext/ciphertext pair, can efficiently and with high probability
both find the seed of $S$ and find preimages/collisions of $H$.
\end{corollary}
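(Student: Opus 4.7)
The plan is to reduce both extraction tasks to the LIONESS oracle $\mathcal{A}_1$ by adapting, independently, the proof of Theorem \ref{th_stream_LION_anderson_biham} (for the stream-cipher part) and the BEAR hash-reduction behind Theorem \ref{th_hash_anderson_biham} as justified by Remark \ref{Hrules} (for the hash part). The extra structure of LIONESS is not an obstacle: its four rounds still contain a LION-style call $S(L+K_1)$ and a BEAR-style keyed hash $H_{K_2}(\overline{R})$, and the auxiliary keys $K_3, K_4$ can be fixed to any convenient value before querying $\mathcal{A}_1$.

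For the seed extraction, given $Y = S(M)$ I pick random $(L,R)$, set $K_1 := L+M$ so that $S(L+K_1) = Y$, and pick arbitrary $K_2, K_3, K_4$. I then run LIONESS encryption forward to obtain a well-defined ciphertext $(L',R')$, feed $\{(L,R),(L',R')\}$ to $\mathcal{A}_1$, and read off $M := L + K_1^*$ from the returned $K_1^*$. The argument that $K_1^*$ is as good a seed as our $K_1$ is identical to the LION case in Theorem \ref{th_stream_LION_anderson_biham} and invokes no condition on $S$.

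For the hash extraction, given a target $Y \in \FF^l$, I set $R = 0$, pick random $L$ together with arbitrary $K_1, K_3, K_4$, define $\overline{R} := S(L+K_1)$ and force $\overline{L} := L + Y$ directly (leaving $K_2$ implicit), then compute $R' := S(\overline{L}+K_3)$ and $L' := \overline{L} + H_{K_4}(R')$. By the surjectivity of $H^{\overline{R}}$ for most $\overline{R}$, a $K_2$ with $H_{K_2}(\overline{R}) = Y$ exists with high probability over $K_1$, hence $\{(L,0),(L',R')\}$ is a legitimate LIONESS plaintext/ciphertext pair for some genuine key. The oracle returns a consistent $(K_1^*, K_2^*, K_3^*, K_4^*)$; recomputing the intermediate $\overline{R}^*$ from those keys we obtain $H_{K_2^*}(\overline{R}^*) = Y$, i.e.\ a preimage of $Y$ under the underlying unkeyed $H'$ via the string $K_2^* \| \overline{R}^*$. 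Collisions follow by rerunning the construction with fresh random $L$'s and comparing the returned $(K_2^*, \overline{R}^*)$ pairs.

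The main obstacle, as in the BEAR reduction, is precisely to ensure that the synthetic ciphertext we hand $\mathcal{A}_1$ is realizable by some LIONESS key, so that the oracle does not simply fail on an off-distribution input; this is exactly what the hypothesis that $H^R$ is surjective for most $R$ is designed to guarantee, and no assumption on $S$ beyond its being a function is needed.
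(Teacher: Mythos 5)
Your overall strategy --- reading LIONESS's first stream round as in Theorem~\ref{th_stream_LION_anderson_biham} and one of its keyed-hash rounds as in the BEAR reduction (Theorem~\ref{th_hash_anderson_biham} via Remark~\ref{Hrules}), with the remaining subkeys fixed by the attacker --- is exactly the intended one, and your seed-extraction half is correct: it is the LION argument verbatim and needs no hypothesis on $H$ or $S$.

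The hash half, however, has a genuine gap at the point where you invoke the surjectivity hypothesis. By fixing $R=0$ you force $\overline{R}=S(L+K_1)$ to lie in $\mathrm{Im}(S)$, a subset of $\FF^r$ of size at most $2^l$. The hypothesis only says that $H^R$ is surjective for \emph{most} $R\in\FF^r$; the exceptional set, however sparse, can be vastly larger than $2^{l-r}\cdot 2^r$ and nothing prevents it from containing all of $\mathrm{Im}(S)$ --- in the extreme, $S$ could be constant with value a bad $R$, which is allowed since (as you rightly insist) no assumption on $S$ is available. So the step ``a $K_2$ with $H_{K_2}(\overline{R})=Y$ exists with high probability over $K_1$'' is unjustified, and with it the claim that your synthetic pair is realizable by a genuine LIONESS key; this is precisely the phenomenon that forces the extra \emph{good pairing} assumption in Theorem~\ref{th_hash_LION_anderson_biham}, whose $(L,0)$ trick you are implicitly copying without the corresponding assumption. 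The repair is to put the randomness on the $r$-bit half that feeds the keyed hash: take $R$ uniform in $\FF^r$, so that $\overline{R}=R+S(L+K_1)$ is uniform and the ``most $R$'' hypothesis applies directly (this also yields distinct $\overline{R}$'s across runs, which is what makes the two returned pairs $(K_2^*,\overline{R}^*)$ distinct and hence an actual collision of $H'$); equivalently, work backwards from a random ciphertext and target the $K_4$ round, as in Theorem~\ref{th_hash_lioness}. Note also that once $R\neq 0$ your formula for $R'$ must include the right half, $R'=\overline{R}+S(\overline{L}+K_3)$, as the decryption column of the LIONESS table shows.
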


\section{Our improvements}
\label{results}
We propose a property for the keyed hash function, $H_K$, able to ensure 
the security from any key-recovery attack that uses plaintext/ciphertext pairs.
\begin{definition}\label{def_H_kresistant}
 Given a keyed hash function $\mathcal{H}=\Set{H_K}_{K \in \FF^k}$, $H_K:\FF^r\mapsto\FF^l$ for any $K\in\FF^k$, we say that $\mathcal{H}$ is \emph{key-resistant} if, given a pair $(Z,R)$ such that $Z=H_K(R)$ for a random $K$ and a random $R$, then it is hard to find $K$.
\end{definition}
Let us consider a keyed hash function $\mathcal{H}$ of kind $H_K=H'(f(K,R))$
for some injective function $f$. For practical purposes we want also that 
$(K,R)$ is easy to find from $f(K,R)$. 
In \cite{CGC-cry-art-AndersonBiham96} $f$ can be a concatenation,
but we do not need to be so restrictive.
Let $K$ and $R$ be random and consider the equation:
\begin{equation}
\label{Zeq}
Z=H'(f(K,R))=H_K(R)
\end{equation}
We note that \ref{1b}-(b) for $H'$ {\em implies} that (\ref{Zeq})
cannot be solved knowing $Z$. On the other hand,
the key-resistance of $\mathcal{H}$ {\em means} that (\ref{Zeq})
cannot be solved knowing $Z$ {\em and} $R$.
It may seem that there is a logical link between the two conditions,
but generally speaking there is none, as we are going to show:
\begin{itemize}
\item Suppose that  \ref{1b}-(b) does not hold. From $K$ and $R$ we get $Z$. 
Then we can solve $Z=H'(X)$. However, $Z$ can have many preimages 
($2^{r-l}$ on average), and so $X$ is likely to be outside $\mathrm{Im}(f)$.
The knowledge of $R$ cannot help here, except in discarding unwanted
preimages. Only if  \ref{1b}-(b) fails badly, that is, if we can get
efficiently all preimages of $Z$, then we will be able to solve 
(\ref{Zeq}) by discarding all preimages except that of the desired form
$(K,R)$.
\item Suppose that $\mathcal{H}$ is not key-resistant.
      If $Z$ does not come from $\mathrm{Im}(f)$ then there is
      no way the lack of key-resistance can help. But even if
      $Z=H(K,R)$ for some random $K$ and $R$, still the attacker does not
      know $R$ and so lack of key-resistance cannot help, unless
      the attacker is allowed to search in a brute-force effort
      the whole $\FF^r$, which is supposed to be hard in our context.
\end{itemize}

If $H'$ is pseudo-random, then $\mathcal{H}$ is clearly key-resistant.

As regards the stream-cipher, we can consider a similar notion in a slightly
more general situation, that is, when
$\mathcal{K}\subset\FF^{2l}$.
\begin{definition}\label{def_S_kresistant}
Let $\mathcal{K}\subset\FF^{2l}$.
 Given a stream cipher $S:\FF^l\mapsto\FF^r$, we say that 
$S$ is \emph{key-resistant} if, given a pair $(Z,L)$ such that $Z=S(L+K_1)$ 
for a random $(K_1,K_2)\in\mathcal{K}$ and a random $L\in\FF^l$, 
then it is hard to find $K_1$. 
\end{definition}
When $\mathcal{K}=\FF^{2l}$ we obviously have the equivalence between
\ref{2a}-(a) and the key-resistance, since translations act regularly.
\begin{remark}
We could change the definition of LION by having a different
action induced by the keys, that is, $S(\tau_K(L))$ instead
of $S(L+K)$, where $\{\tau_K\}_{K\in \mathcal{K}}\subset \mathrm{Sym}(\FF^l)$,
$\mathrm{Sym}(\FF^l)$ being the symmetric group acting on $\FF^l$.
 All subsequent results will still hold, provided the action
is regular.
\end{remark}

\subsection{Our improvements for BEAR} 
\label{res_BEAR}

It is possible to give an improvement of Theorem \ref{th_hash_anderson_biham}, 
passing from one-pair oracles to multi-pair oracles.

\begin{theorem}\label{th_hash_bear}
Let $n\geq 1$. Let $\mathcal{A}_{n}$ be an oracle able to find the key of BEAR given any set of $n$ plaintext-ciphertext pairs $\Set{\left((L_i,R_i),(L'_i,R'_i)\right)}_{1\leq i\leq n}$. Then $\mathcal{A}_n$ is able to solve efficiently any equation $Z=H_{K_{1}}(R)$, knowing $Z$ and $R$, for any random $R\in\FF^r$ and any random $K_1\in\FF^k$.   
\begin{proof}
 Let us choose a set $\Set{L_i}_{1\leq i\leq n}\subset \FF^l$ and consider the set of plaintexts $\Set{(L_i,R)}_{1\leq i\leq n}$. It is possible to generate a set of ciphertexts  $\Set{(L_i',R_i')}_{1\leq i\leq n}$ by choosing any sub-key $K_2$ and computing: $\overline{L}_i=L_i+Z$, $R'_i=R + S(L_i+Z)$, $L'_i= L_i+Z + H_{K_2}(R_i')$. With $\Set{\left((L_i,R),(L_i',R_i')\right)}_{1\leq i \leq n}$ as input, $\mathcal{A}_n$ outputs $K_2$, which was already known, and $K_1$, which was unknown.  
\end{proof}
\end{theorem}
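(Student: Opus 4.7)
The plan is to lift the single-pair reduction behind Theorem \ref{th_hash_anderson_biham} to the $n$-pair regime. The key observation is that in BEAR encryption, the unknown hash value $H_{K_1}(R)$ depends only on the right half of the plaintext, so if I present the oracle with $n$ plaintexts sharing a common right half (namely the $R$ from the target equation $Z = H_{K_1}(R)$) then the unknown quantity $H_{K_1}(R)$ appears identically in every round, and I can substitute the \emph{given} value $Z$ in its place throughout the simulation. This way the $n$ pairs can be fabricated without ever needing to know $K_1$.

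Concretely, I would first pick an arbitrary subkey $K_2 \in \FF^k$ and an arbitrary family of left halves $L_1, \dots, L_n \in \FF^l$, forming the $n$ plaintexts $(L_i, R)$. Next, for each $i$, I would run the BEAR encryption of $(L_i, R)$ under the hypothetical key $(K_1, K_2)$, replacing $H_{K_1}(R)$ by $Z$ at every occurrence: set $\overline{L}_i := L_i + Z$, then $R'_i := R + S(\overline{L}_i)$, then $L'_i := \overline{L}_i + H_{K_2}(R'_i)$. Each step uses only data I know — namely $S$, $H_{K_2}$, $L_i$, $R$, $Z$ — and the secret $K_1$ never enters the computation. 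Finally, I would feed the pairs $\{((L_i, R), (L'_i, R'_i))\}_{i=1}^n$ to $\mathcal{A}_n$ and read off $K_1$ from its output.

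The main delicate point is uniqueness: the fabricated pairs are valid encryptions under \emph{any} key $(K', K_2)$ whose first component satisfies $H_{K'}(R) = Z$, so the oracle is free to return any such $K'$. This turns out to be harmless, because the theorem only asks us to solve $Z = H_{K_1}(R)$, and any $K'$ the oracle returns is by construction a solution of exactly this equation. The hypothesis that $\mathcal{A}_n$ succeeds on arbitrary $n$-pair inputs then guarantees that such a $K'$ is produced efficiently and with high probability, completing the reduction. The approach is robust to choice of the $L_i$ (they may be taken arbitrarily or at random), and costs only one oracle call plus $n$ applications each of $S$ and $H_{K_2}$.
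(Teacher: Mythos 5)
Your construction coincides with the paper's own proof: fix the right half $R$ across all $n$ plaintexts, choose $K_2$ and the $L_i$ freely, substitute the known value $Z$ for $H_{K_1}(R)$ to fabricate the ciphertexts, and query $\mathcal{A}_n$. Your additional remark that any key $K'$ the oracle returns must satisfy $H_{K'}(R)=Z$, and hence is still a valid solution of the target equation, is a correct clarification that the paper leaves implicit.
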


Again, when we say {\em efficiently} we disregard any effort put by the oracle
itself (see Remark \ref{eff}).

\begin{corollary}
If the (keyed) hash function is key-resistant, no efficient
multi-pair oracle exists for BEAR.
\end{corollary}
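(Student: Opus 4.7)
The plan is to prove this by a direct contrapositive reduction using Theorem~\ref{th_hash_bear}: if an efficient multi-pair attack on BEAR existed, it would be turned into an efficient algorithm solving $Z=H_{K_1}(R)$ for random $K_1,R$, violating key-resistance (Definition~\ref{def_H_kresistant}).

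First I would assume, for contradiction, that there is some $n\geq 1$ and some efficient oracle $\mathcal{A}_n$ that recovers the BEAR key from any $n$ plaintext/ciphertext pairs. I would then set up an arbitrary instance of the key-resistance problem: sample $K_1\in\FF^k$ and $R\in\FF^r$ uniformly at random, compute $Z=H_{K_1}(R)$, and present the challenger with $(Z,R)$ while keeping $K_1$ hidden. The goal is to recover $K_1$.

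Next, I would invoke the construction inside the proof of Theorem~\ref{th_hash_bear} verbatim: pick any $K_2\in\FF^k$ myself, choose $n$ distinct values $L_1,\dots,L_n\in\FF^l$, and build the simulated ciphertexts $\overline{L}_i=L_i+Z$, $R'_i=R+S(\overline{L}_i)$, $L'_i=\overline{L}_i+H_{K_2}(R'_i)$. Feeding $\{((L_i,R),(L'_i,R'_i))\}_{1\leq i\leq n}$ to $\mathcal{A}_n$ returns a key pair, whose first component is the unknown $K_1$ (since the simulated transcripts are exactly the BEAR encryptions under $(K_1,K_2)$). All steps in the reduction cost one evaluation each of $S$ and $H_{K_2}$ per pair, plus one call to $\mathcal{A}_n$, so in the sense of Remark~\ref{eff} the total work is efficient.

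The only point requiring a bit of care is checking that the transcripts fed to $\mathcal{A}_n$ are distributed as genuine BEAR encryptions under a random key. This holds because $K_1$ is random by construction, $K_2$ is chosen freely, and the $L_i,R$ are arbitrary inputs the adversary is entitled to select; so $\mathcal{A}_n$ has no ground to distinguish this input from a legitimate query and must return the correct key with its claimed success probability. This yields an efficient algorithm that extracts $K_1$ from $(Z,R)$, contradicting key-resistance and completing the proof. I do not expect any genuine obstacle here, since Theorem~\ref{th_hash_bear} does all the real work; the corollary is just its logical contrapositive.
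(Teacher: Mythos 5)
Your argument is correct and is exactly the route the paper intends: the corollary is stated as the immediate contrapositive of Theorem~\ref{th_hash_bear}, whose simulation (choose $K_2$, set $\overline{L}_i=L_i+Z$, $R'_i=R+S(\overline{L}_i)$, $L'_i=\overline{L}_i+H_{K_2}(R'_i)$) you reproduce faithfully. No gaps; your extra remark on the simulated transcripts being genuine BEAR encryptions under $(K_1,K_2)$ is the only point needing care, and you handle it correctly.
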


Unfortunately we have not been able to obtain a direct improvement 
of Theorem~\ref{th_stream_anderson_biham}, but it is quite simple 
to modify BEAR in order to obtain a similar result also for the stream cipher.
Let us consider the following variation of BEAR's scheme, in which $\mathcal{K}\subset\FF^k\times\FF^l\times\FF^k$, for some $k$, with $K_1$, $K_3\in\FF^{k}$ and $K_2\in\FF^{l}$.
\begin{center}
\textbf{BEAR 2}
\end{center}
\begin{center}
\begin{tabular}{|l|l|}
\hline
 ENCRYPTION 				& 	DECRYPTION		\\
\hline					
$\overline{L}= L + H_{K_1}(R)$		&	$\overline{L}=L' + H_{K_3}(R')$ 	\\
$R'=R + S(\overline{L}+K_2)$		&	$R=R' + S(\overline{L}+K_2)$	\\
$L'= \overline{L} + H_{K_3}(R')$	&	$L = \overline{L} + H_{K_1}(R)$		\\
\hline
\end{tabular}
\end{center}
First we extend Th. \ref{th_hash_bear} from BEAR to BEAR2.
\begin{theorem}
\label{th_hash_bear2}
Let $n\geq 1$. Let $\mathcal{A}_{n}$ be an oracle able to find the key of BEAR 2 given any set of $n$ plaintext-ciphertext pairs $\Set{\left((L_i,R_i),(L'_i,R'_i)\right)}_{1\leq i\leq n}$. Then $\mathcal{A}_n$ is able to solve any equation $Z=H_{K_{1}}(R)$, knowing $Z$ and $R$, for any random $R\in\FF^r$ and any random $K_1\in\FF^k$.   
\begin{proof}
 Obvious adaption of the proof of Th. \ref{th_hash_bear}. We choose this time $K_2$ and $K_3$, we obtain $K_1$ again. 
\end{proof}
\end{theorem}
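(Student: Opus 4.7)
The plan is to mimic the construction from Theorem \ref{th_hash_bear}, adjusting for the fact that BEAR2 now has three independent sub-keys $K_1, K_2, K_3$ instead of the two present in BEAR. The strategy is: fix the challenge pair $(Z, R)$ whose $K_1$ I want to recover, then manufacture $n$ plaintext-ciphertext pairs that are consistent with a BEAR2 encryption using this $K_1$ together with two other sub-keys that I am free to pick myself. Concretely, I would choose $n$ arbitrary values $L_1, \ldots, L_n \in \FF^l$ and consider the plaintexts $(L_i, R)$, all sharing the common right half $R$.

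Next I would select any $K_2 \in \FF^l$ and any $K_3 \in \FF^k$ and simulate BEAR2's encryption by running its three formulas forward with $Z$ substituted in place of the unknown value $H_{K_1}(R)$: set $\overline{L}_i = L_i + Z$, then compute $R'_i = R + S(\overline{L}_i + K_2)$ and $L'_i = \overline{L}_i + H_{K_3}(R'_i)$. By construction each pair $((L_i, R), (L'_i, R'_i))$ is a genuine BEAR2 encryption under the triple $(K_1, K_2, K_3)$, where the first component is precisely the value I am trying to identify.

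Handing these $n$ pairs to $\mathcal{A}_n$ yields a consistent triple; the second and third components reproduce the $K_2$ and $K_3$ I already chose, and the first component must satisfy $H_{K_1}(R) = Z$, giving the desired solution. The only thing that requires thought is the same implicit issue as in Theorem \ref{th_hash_bear}: ruling out that $\mathcal{A}_n$ could output a spurious triple whose first component does not solve the equation. Having all $n$ plaintexts share the common right half $R$, together with enough distinct $R'_i$ values to pin down any alternative $K_3^\ast$ through the second-hash step, suppresses this ambiguity, which is presumably why the authors describe the adaption as obvious once the new sub-key $K_3$ is allocated alongside $K_2$.
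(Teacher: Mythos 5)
Your construction is exactly the paper's intended adaptation of Theorem~\ref{th_hash_bear}: keep the common right half $R$, substitute $Z$ for $H_{K_1}(R)$, choose $K_2$ and $K_3$ yourself, run the BEAR2 encryption equations forward, and read off $K_1$ from the oracle's output. The closing remark about spurious keys is a caveat the paper itself leaves implicit (the oracle is simply assumed to return the key actually consistent with the construction), so your proof matches the paper's in substance.
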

Now we are ready for the following result, linking the security of BEAR2 also to the properties of the stream cipher $S$, in a multi-pair context.
\begin{theorem}\label{th_BEAR_stream}
Let $n\geq 1$. Let $\mathcal{A}_{n}$ be an oracle able to find the key of BEAR2 given any set of $n$ plaintext-ciphertext pairs $\Set{\left((L_i,R_i),(L'_i,R'_i)\right)}_{1\leq i\leq n}$. Then $\mathcal{A}_n$ is able to solve any equation $Z=S(X+K_{2})$, knowing $Z$ and $X$, for any random $X\in \FF^l$ and any random $K_2\in\FF^l$. 
\begin{proof}
Let us choose a set $\Set{R_i}_{1\leq i\leq n}\subset \FF^r$ and two sub-keys $K_1$, $K_3$. It is possible to generate plaintext/ciphertext pairs by choosing $L_{i}= X + H_{K_1}(R_i)$ and computing: $\overline{L_i}= L_i + H_{K_1}(R_i)=X$, $R'_i= R_i+Z$, $L_i'=X + H_{K_3}(R_i')$. We give in input to $\mathcal{A}_n$ the set $\Set{(L_i,R_i),(L_i',R_i')}_{1\leq i\leq n}$, $\mathcal{A}_n$ returns $K_1$, $K_3$ which were already known, and $K_2$, which was unknown.  
\end{proof}
\end{theorem}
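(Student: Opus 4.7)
The plan is to reduce solving the equation $Z=S(X+K_2)$ to a single invocation of the BEAR2 key-recovery oracle $\mathcal{A}_n$. The strategy is dual to that of Theorem~\ref{th_hash_bear2}: rather than pinning the right half $R$ so that the hash call is exposed, I would pin the stream-cipher argument $\overline{L}+K_2$ so that the corresponding output is forced to be $Z$, and let the oracle recover the middle subkey $K_2$ for us.

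Concretely, I would first draw $K_1,K_3\in\FF^k$ uniformly at random and choose any $n$ values $R_1,\dots,R_n\in\FF^r$. Then, for each $i$, I would set the plaintext half $L_i := X+H_{K_1}(R_i)$; the point of this choice is that, under the (hypothetical) BEAR2 key $(K_1,K_2,K_3)$ in which $K_2$ is the sought solution, the first intermediate value $\overline{L}_i = L_i+H_{K_1}(R_i)$ collapses to the prescribed $X$ for every $i$. Consequently the stream-cipher call produces $S(\overline{L}_i+K_2)=S(X+K_2)=Z$, which is known, so one may write $R_i' = R_i+Z$ and $L_i' = X+H_{K_3}(R_i')$ without touching $K_2$. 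The resulting $n$ pairs $\{((L_i,R_i),(L_i',R_i'))\}_{i=1}^n$ are, by construction, genuine BEAR2 encryptions under the single key $(K_1,K_2,K_3)$, and feeding them to $\mathcal{A}_n$ returns this triple; since $K_1$ and $K_3$ were chosen by the reducer, $K_2$ is read off with only $O(n)$ additional work, in the sense of Remark~\ref{eff}.

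The only point that needs genuine verification — and, I expect, the main (but minor) obstacle — is that the simulated key distribution matches the one $\mathcal{A}_n$ is built to attack. Because the reducer samples $K_1$ and $K_3$ uniformly, the instance of $S$ supplies $K_2$ uniformly by hypothesis, and the three subkeys lie independently in $\FF^k\times\FF^l\times\FF^k$, the oracle's view is indistinguishable from that of a genuine BEAR2 session. It is worth noting that no property of the (keyed) hash function beyond being a set function of the stated type is invoked, mirroring the minimalist hypotheses of the preceding results.
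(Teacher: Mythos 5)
Your construction is exactly the paper's: fix $K_1,K_3$, set $L_i = X+H_{K_1}(R_i)$ so that $\overline{L}_i=X$, use the known $Z$ to form $R_i'=R_i+Z$ and $L_i'=X+H_{K_3}(R_i')$, then feed the $n$ pairs to $\mathcal{A}_n$ and read off $K_2$. The proposal is correct and matches the paper's proof (your added remark on the key distribution is a harmless elaboration the paper leaves implicit).
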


We can summarize our findings on BEAR2 in the following corollary.
\begin{corollary}
No efficient multi-pair key-recovery oracle exists for BEAR2
if  the hash function is key-resistant or the stream cipher is key-resistant.
\end{corollary}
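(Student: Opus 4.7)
The plan is to argue by contrapositive, leveraging the two preceding theorems on BEAR2. Suppose for contradiction that there exists an efficient multi-pair key-recovery oracle $\mathcal{A}_n$ for BEAR2, for some $n\geq 1$. I will then show that this single oracle simultaneously refutes the key-resistance of both primitives; the contrapositive then gives exactly the claim, since it is enough that one of the two key-resistance hypotheses hold.

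First, I would feed $\mathcal{A}_n$ into the reduction of Theorem~\ref{th_hash_bear2}. This produces an efficient procedure that, given $(Z,R)$ with $Z=H_{K_1}(R)$ for random $K_1\in\FF^k$ and random $R\in\FF^r$, outputs $K_1$. By Definition~\ref{def_H_kresistant} this is exactly a refutation of the key-resistance of $\mathcal{H}$.

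Second, I would feed the same $\mathcal{A}_n$ into the reduction of Theorem~\ref{th_BEAR_stream}. This yields an efficient procedure that, given $(Z,X)$ with $Z=S(X+K_2)$ for random $K_2\in\FF^l$ and random $X\in\FF^l$, outputs $K_2$. By Definition~\ref{def_S_kresistant} this refutes the key-resistance of $S$.

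Taking the contrapositive of these two implications together gives the corollary: if $\mathcal{H}$ is key-resistant or $S$ is key-resistant, then no such $\mathcal{A}_n$ can exist. I do not anticipate any real obstacle here; the substantive content is entirely packed into Theorems~\ref{th_hash_bear2} and \ref{th_BEAR_stream}, whose reductions are possible precisely because the distinguisher is free to sample all sub-keys except the one it targets. The present statement is just the clean logical assembly of those two reductions together with Remark~\ref{eff} on the meaning of \emph{efficiently}.
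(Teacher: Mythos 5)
Your proposal is correct and matches the paper's intent exactly: the corollary is stated as an immediate summary of Theorems~\ref{th_hash_bear2} and \ref{th_BEAR_stream}, and the contrapositive assembly you describe (an efficient multi-pair oracle for BEAR2 would break the key-resistance of both $\mathcal{H}$ and $S$, so either key-resistance hypothesis rules it out) is precisely the argument the paper relies on, with the caveat on \emph{efficiently} from Remark~\ref{eff} correctly invoked.
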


\subsection{Our improvements for LION}
\label{res_LION}

A result similar to Theorem \ref{th_hash_bear} holds for LION.

\begin{theorem}\label{th_lion_stream}
Let $n\geq 1$. Let $\mathcal{A}_{n}$ be an oracle able to find the key of LION given any set of $n$ plaintext-ciphertext pairs $\Set{\left((L_i,R_i),(L'_i,R'_i)\right)}_{1\leq i\leq n}$. Then $\mathcal{A}_n$ is able to solve any equation $Z=S(L+K_{1})$, knowing $Z$ and $L$, for any random $L\in \FF^l$ and any random $K_1\in\FF^k$. 
\end{theorem}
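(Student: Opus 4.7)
The plan is to mimic the strategy of the proof of Theorem~\ref{th_hash_bear}: fabricate $n$ plaintext-ciphertext pairs that are legitimate LION encryptions under a key whose first component is exactly an unknown solution of $Z=S(L+K_1)$, so that whatever key $\mathcal{A}_n$ returns must contain such a solution.

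Concretely, I would freeze the left halves of all plaintexts to the given $L$, pick $n$ distinct right halves $R_1,\dots,R_n\in\FF^r$, and choose any sub-key $K_2$ we like. Because $L_i=L$, the first intermediate value of LION is $\overline{R}_i=R_i+S(L+K_1)=R_i+Z$, which is computable from the inputs $Z$ and $R_i$ even though $K_1$ is unknown. The remaining steps then use only public primitives and the known $K_2$: set $L'_i=L+H(\overline{R}_i)=L+H(R_i+Z)$ and $R'_i=\overline{R}_i+S(L'_i+K_2)$. By construction the set $\{((L,R_i),(L'_i,R'_i))\}_{1\le i\le n}$ is a valid LION plaintext-ciphertext collection under the (unknown but existent) key $(K_1,K_2)$.

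Hand this set to $\mathcal{A}_n$; it returns some key $(K_1^\ast,K_2^\ast)$ consistent with the pairs. The only delicate point is that $\mathcal{A}_n$ need not return the particular $K_1$ we had in mind when $Z$ was generated, but any key it outputs must reproduce the first round, i.e.\ $\overline{R}_i=R_i+S(L+K_1^\ast)$. Since we built $\overline{R}_i=R_i+Z$, this forces $S(L+K_1^\ast)=Z$, so $K_1^\ast$ is itself a solution of the target equation. Reading $K_1^\ast$ off the oracle's answer solves $Z=S(L+K_1)$ with overhead reduced to $n$ evaluations each of $S$ and $H$ plus a few xors, which is the sense of ``efficiently'' described in Remark~\ref{eff}.

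I do not expect a genuine obstacle: unlike the BEAR2 case of Theorem~\ref{th_BEAR_stream}, the stream cipher here already sits in the first round, so we can inject the unknown $Z$ at exactly the stage where it is needed; and unlike LION's hash-side analysis, no ``good pairing'' hypothesis is required because we never have to invert or prescribe the image of $H$. The argument is a direct multi-pair generalisation of the one-pair construction underlying Theorem~\ref{th_stream_LION_anderson_biham}, with the role of the freely chosen $K_2$ preserved intact.
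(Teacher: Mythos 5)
Your construction is exactly the paper's proof: fix the left half to $L$, vary $R_i$, choose $K_2$, compute $\overline{R}_i=R_i+Z$, $L'_i=L+H(R_i+Z)$, $R'_i=R_i+Z+S(L'_i+K_2)$, and feed the pairs to $\mathcal{A}_n$ to recover $K_1$. Your extra remark that any returned key $K_1^\ast$ must satisfy $S(L+K_1^\ast)=Z$ is slightly overstated (consistency of the pairs forces this only when $K_2^\ast=K_2$, a point the paper also glosses over by assuming the oracle returns the encrypting key), but the argument and its conclusion coincide with the paper's.
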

\begin{proof}
Let us choose a set $\Set{R_i}_{1\leq i\leq n}\subset \FF^r$ and consider the set of plaintexts $\Set{(L,R_i)}_{1\leq i\leq n}$. It is possible to generate a set of ciphertexts  $\Set{(L_i',R_i')}_{1\leq i\leq n}$ by choosing any sub-key $K_2$ and computing: $\overline{R}_{i}=R_i+S(L+K_1)=R_i+Z$, $L'_i=L_i + H(R_i+Z)$, $R'_i= R_i+Z + S(L'_i+K_2)$. Using $\mathcal{A}_{n}$ we can find $K_2$, which was already known, and $K_1$, which was unknown.  
\end{proof}
As we have already seen for BEAR, we have not been able to extend a result 
similar to Theorem \ref{th_lion_stream} also for its hash functions, 
but it is quite simple to modify LION in order to obtain  it, 
as in the following table, where $K_1$, $K_3\in \FF^{l}$ and $K_2\in\FF^k$, 
and so $\mathcal{K}\subset\FF^l\times\FF^k\times\FF^l$ for some $k$.
\begin{center}
\textbf{LION2}\\
\begin{tabular}{|l|l|}
\hline
 ENCRYPTION 				& 	DECRYPTION			\\
\hline
$\overline{R}= R + S(L+K_1)$		&	$\overline{R}= R' + S(L'+K_3)$ 	\\
$L'=L+H_{K_2}(\overline{R})$		&	$L=L'+H_{K_2}(\overline{R})$	\\
$R'= \overline{R} + S(L'+K_3)$		&	$R= \overline{R} + S(L+K_1)$	\\
\hline
\end{tabular}
\end{center}
\begin{theorem}
\label{th_lion2_stream}
Let $n\geq 1$. Let $\mathcal{A}_{n}$ be an oracle able to find the key of LION2 given any set of $n$ plaintext-ciphertext pairs $\Set{\left((L_i,R_i),(L'_i,R'_i)\right)}_{1\leq i\leq n}$. Then $\mathcal{A}_n$ is able to solve any equation $Z=S(L+K_{1})$, knowing $Z$ and $L$, for any random $L\in \FF^l$ and any random $K_1\in\FF^k$. 
\end{theorem}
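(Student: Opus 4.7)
My plan is to follow the same template used in the proof of Theorem \ref{th_lion_stream}, with only the small adjustment needed because in LION2 the middle-round hash $H_{K_2}$ is now keyed. The idea is to build, from the given pair $(Z,L)$, a batch of $n$ LION2 plaintext/ciphertext pairs whose underlying key has as its first component exactly the unknown $K_1$ we wish to recover, and then let $\mathcal{A}_n$ return that key.

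Concretely, I would pick $n$ pairwise distinct $R_1,\dots,R_n\in\FF^r$ and form the plaintexts $(L,R_i)$, so that the first-round stream-cipher call is the fixed (unknown to me) value $S(L+K_1)=Z$. Next I would pick the two remaining sub-keys $K_2$ and $K_3$ uniformly on my own. Each encryption can then be written down explicitly without ever naming $K_1$: $\overline{R}_i = R_i+Z$, $L'_i = L + H_{K_2}(R_i+Z)$, and $R'_i = R_i + Z + S(L'_i + K_3)$. All three are computable, since the only place $K_1$ would have entered has already been collapsed to the given $Z$.

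I would then hand the set $\Set{((L,R_i),(L'_i,R'_i))}_{1\leq i\leq n}$ to $\mathcal{A}_n$. By construction this is a genuine set of plaintext/ciphertext pairs for LION2 under the key $(K_1,K_2,K_3)$, so the oracle returns this triple with high probability; discarding the self-chosen $K_2,K_3$ yields $K_1$, and by construction $S(L+K_1)=Z$. The only point that needs any care is that the synthetic key must look random to the oracle, which here is immediate: $K_1$ is random by hypothesis while $K_2,K_3$ have been drawn uniformly and independently by us, so the composite key is uniform on $\mathcal{K}$ and the oracle's success guarantee applies. Beyond that, the argument is a routine one-round-at-a-time simulation of LION2 encryption, essentially identical in structure to the proof of Theorem \ref{th_lion_stream}.
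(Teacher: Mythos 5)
Your construction is exactly the ``obvious adaption'' of the proof of Theorem~\ref{th_lion_stream} that the paper itself gives for this statement: same plaintexts $(L,R_i)$, self-chosen $K_2,K_3$, the round-by-round simulation with $S(L+K_1)$ replaced by the known $Z$, and recovery of $K_1$ from the oracle's output. The proposal is correct and matches the paper's proof.
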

\begin{proof}
 Obvious adaption of the proof of Th. \ref{th_lion_stream}.
\end{proof}

\begin{theorem}\label{th_hash_lion}
Let $n\geq 1$. Let $\mathcal{A}_{n}$ be an oracle able to find the key of LION 2 given any set of $n$ plaintext-ciphertext pairs $\Set{\left((L_i,R_i),(L'_i,R'_i)\right)}_{1\leq i\leq n}$. Then $\mathcal{A}_n$ is able to solve any equation $Z=H_{K_{2}}(X)$, knowing $Z$ and $X$, for any random $X\in\FF^r$ and any random $K_2\in\FF^k$.   
\end{theorem}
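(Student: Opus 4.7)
The plan is to mirror the strategy used in the proofs of Th.~\ref{th_hash_bear} and Th.~\ref{th_BEAR_stream}: construct $n$ plaintext/ciphertext pairs for LION2 in which the unknown hash image $Z = H_{K_2}(X)$ appears at a predetermined internal point of the cipher, so that feeding these pairs to $\mathcal{A}_n$ forces it to return, in particular, the key $K_2$ that witnesses the equation we want to solve. The two subkeys $K_1,K_3\in\FF^l$ will be chosen by us, while $K_2$ is the unknown that the oracle recovers.

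The key steps, in order, are as follows. First I would fix arbitrary $K_1,K_3\in\FF^l$ and pick pairwise distinct left halves $L_1,\dots,L_n\in\FF^l$. Next, for each $i$, I would set
\[
R_i \;=\; X + S(L_i+K_1),
\]
so that the first LION2 round applied to $(L_i,R_i)$ gives $\overline{R}_i = R_i + S(L_i+K_1) = X$. Using the assumption $Z = H_{K_2}(X) = H_{K_2}(\overline{R}_i)$, the second round then produces $L'_i = L_i + Z$, and the third round produces $R'_i = X + S(L_i+Z+K_3)$; both values are computable from quantities we already know. Finally I would submit the set $\{((L_i,R_i),(L'_i,R'_i))\}_{1\leq i\leq n}$ to $\mathcal{A}_n$: by construction these are bona fide plaintext/ciphertext pairs under the key $(K_1,K_2,K_3)$, so the oracle returns this triple and $K_2$ is read off directly.

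The main obstacle is not the combinatorics, which are essentially forced by the requirement $\overline{R}_i = X$, but the standard issue that the constructed $R_i$ are a deterministic function of $L_i$, $X$, and $K_1$ rather than independent uniform samples. One must therefore check that the freedom in choosing the $L_i$ already places the input in the regime where $\mathcal{A}_n$ is guaranteed to work. As in Th.~\ref{th_hash_bear} and Th.~\ref{th_lion_stream}, we handle this in the standard way indicated by Remark~\ref{eff}: the oracle is assumed to succeed on \emph{any} valid set of $n$ plaintext/ciphertext pairs, and our construction produces such a set, so the word \emph{efficiently} in the statement is legitimate in precisely the same sense as in the earlier theorems.
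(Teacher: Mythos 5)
Your construction is exactly the paper's proof: choose $K_1,K_3$, set $R_i=X+S(L_i+K_1)$ so that $\overline{R}_i=X$, compute $L'_i=L_i+Z$ and $R'_i=X+S(L'_i+K_3)$, and feed the resulting pairs to $\mathcal{A}_n$ to recover the unknown $K_2$. The proposal is correct and follows essentially the same approach as the paper, with only a minor added remark on the distribution of the constructed inputs.
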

\begin{proof}
 Let us choose a set $\Set{L_i}_{1\leq i\leq n}\subset \FF^l$ and any sub-keys $K_1$, $K_3\in\FF^l$. It is possible to generate plaintext/ciphertext pairs by choosing $R_i= X + S(L_i+K_1)$ and computing: $\overline{R}_i=R_i+S(L_i+K_1)=X+S(L_i+K_1)+S(L_i+K_1)=X$, $L'_i= L_i+Z$, $R'_i= X + S(L'_i+K_3)$ . We give in input to $\mathcal{A}_n$ the set $\Set{(L_i,R_i),(L_i',R_i')}$, $\mathcal{A}_n$ returns $K_1$, $K_3$, which were already known, and $K_2$, which was unknown.  
\end{proof}

We can summarize our findings on LION and LION2 in the following corollary.
\begin{corollary}
No efficient multi-pair key-recovery oracle exists for LION
if  the stream cipher is key-resistant.

No efficient multi-pair key-recovery oracle exists for LION2
if  the hash function is key-resistant or the stream cipher is key-resistant.
\end{corollary}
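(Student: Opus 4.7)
The plan is to establish both parts of the corollary by contrapositive, invoking directly the theorems already proved in this subsection. For the LION statement, suppose toward a contradiction that $S$ is key-resistant but an efficient multi-pair key-recovery oracle $\mathcal{A}_n$ exists for LION. Theorem \ref{th_lion_stream} then converts $\mathcal{A}_n$ into an efficient algorithm that, given $Z$ and $L$, recovers $K_1$ satisfying $Z = S(L+K_1)$ for random $L$ and random $K_1$; but this is precisely what Definition \ref{def_S_kresistant} forbids. Hence no such $\mathcal{A}_n$ can exist.

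For the LION2 statement, the argument splits along the disjunction in the hypothesis. If the stream cipher is key-resistant, I would apply Theorem \ref{th_lion2_stream} to the hypothetical oracle $\mathcal{A}_n$ for LION2 and derive a contradiction with Definition \ref{def_S_kresistant} exactly as above. If instead the keyed hash function is key-resistant, I would invoke Theorem \ref{th_hash_lion}: the same oracle would yield an efficient algorithm solving $Z = H_{K_2}(X)$ for random $X$ and random $K_2$, contradicting Definition \ref{def_H_kresistant}. In either branch the assumption on $\mathcal{A}_n$ collapses, so no efficient multi-pair key-recovery oracle for LION2 exists under either half of the disjunction.

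There is essentially no technical obstacle here: the corollary is a logical repackaging of the three preceding theorems, and nothing beyond a contrapositive formulation is needed. The only two points that deserve a brief mention are that \emph{efficient} must be interpreted in the sense of Remark \ref{eff}, so that wrapping the oracle in the reductions of Theorems \ref{th_lion_stream}, \ref{th_lion2_stream}, and \ref{th_hash_lion} adds only negligible overhead; and that the distributions of $K_1$ (respectively $K_2$) produced inside those reductions genuinely match the random-key distributions required by the key-resistance definitions. Both are immediate from the statements of the cited theorems, making the proof a one-line invocation in each case.
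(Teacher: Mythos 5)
Your proof is correct and is exactly the argument the paper intends: the corollary is stated as a summary of Theorems \ref{th_lion_stream}, \ref{th_lion2_stream} and \ref{th_hash_lion}, and the contrapositive invocation of those reductions (with \emph{efficient} read as in Remark \ref{eff}) is the whole content. Nothing further is needed.
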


\subsection{Our improvements for LIONESS}
\label{res_LIONESS}

Since LIONESS combines the construction of LION and BEAR, it is quite obvious that any provable-security result holding for BEAR and LION still holds for LIONESS. For completeness, we give the formal proofs for our multi-pair results.
\begin{theorem}
\label{th_stream_lioness}
Let $n\geq 1$. Let $\mathcal{A}_{n}$ be an oracle able to find the key of LIONESS given any set of $n$ plaintext-ciphertext pairs $\Set{\left((L_i,R_i),(L'_i,R'_i)\right)}_{1\leq i\leq n}$. Then $\mathcal{A}_n$ is able to solve any equation $Z=S(L+K_{1})$, knowing $Z$ and $L$, for any random $L\in \FF^l$ and any random $K_1\in\FF^l$. 
\end{theorem}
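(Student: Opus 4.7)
The plan is to imitate the strategy of Theorem \ref{th_BEAR_stream} and Theorem \ref{th_lion_stream}, exploiting the fact that in the first round of LIONESS the quantity $S(L+K_1)$ appears only in the combination $R+S(L+K_1)$, which we can simulate by using the given $Z$ as a stand-in for $S(L+K_1)$. Concretely, I would fix the given $L$ and $Z$, pick arbitrary sub-keys $K_2,K_3,K_4$ of the appropriate sizes, and choose $n$ distinct random right halves $R_1,\ldots,R_n\in\FF^r$. The plaintexts fed into (the simulated oracle call for) LIONESS are then $(L,R_i)$ for $i=1,\ldots,n$.

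Next I would compute the corresponding ciphertexts by walking through the LIONESS encryption round by round, using $Z$ in place of $S(L+K_1)$:
\begin{align*}
\overline{R}_i &= R_i + Z, \\
\overline{L}_i &= L + H_{K_2}(\overline{R}_i), \\
R'_i &= R_i + S(\overline{L}_i + K_3), \\
L'_i &= \overline{L}_i + H_{K_4}(R'_i).
\end{align*}
The key point is that each of these four quantities is computable from data we either chose ($L$, $R_i$, $K_2$, $K_3$, $K_4$) or were given ($Z$), together with public evaluations of $H$ and $S$; in particular, the unknown $K_1$ never has to be touched after the first line. Consequently the pairs $\{((L,R_i),(L'_i,R'_i))\}_{1\leq i\leq n}$ are exactly the encryptions of $(L,R_i)$ under the LIONESS key $(K_1,K_2,K_3,K_4)$, where $K_1$ is precisely the unknown value satisfying $S(L+K_1)=Z$.

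Feeding this set to $\mathcal{A}_n$ yields, with high probability, the full key $(K_1,K_2,K_3,K_4)$; discarding the three sub-keys we already fixed gives $K_1$, and we return $S(L+K_1)=Z$ as the solution of the stream-cipher equation. The only subtle point, and essentially the single obstacle, is to be sure that the oracle's answer is consistent: that the plaintext/ciphertext pairs we submit really do uniquely pin down (with high probability) the intended $K_1$ rather than some spurious alternative that happens to reproduce the same ciphertexts. This is handled by the usual ``random $R_i$'' argument, since distinct $R_i$'s make the simulated encryptions behave as $n$ independent constraints on the unknown key, and $n\geq 1$ is exactly the regime in which $\mathcal{A}_n$ is assumed to succeed. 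No property of $H$ or $S$ beyond being set functions is used, so the argument is uniform in the primitives, mirroring the LION and BEAR2 proofs.
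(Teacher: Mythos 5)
Your construction is correct and coincides with the paper's own proof: fix $L$, pick random $R_i$'s and arbitrary $K_2,K_3,K_4$, simulate the LIONESS encryption of each $(L,R_i)$ using $Z$ in place of $S(L+K_1)$, and let $\mathcal{A}_n$ return the unknown $K_1$. Your remark about the oracle possibly returning a spurious key is not even addressed in the paper (it is absorbed into the ``with high probability'' clause), so your write-up is, if anything, slightly more careful on that point.
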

\begin{proof}
Let us choose a set $\Set{R_i}_{1\leq i\leq n}\subset \FF^r$ and consider the set of plaintexts $\Set{(L,R_i)}_{1\leq i\leq n}$. It is possible to generate a set of ciphertexts  $\Set{(L_i',R_i')}_{1\leq i\leq n}$ by choosing any sub-keys $K_2, K_3, K_4$ and computing: $\overline{R}_{i}=R_{i}+Z$, $\overline{L}_{i}= L + H_{K_2}(\overline{R}_{i})$, $R'_i=R_i + S(\overline{L}_{i}+K_3)$ and $L'= \overline{L}_{i} + H_{K_4}(R'_{i})$. Using $\mathcal{A}_{n}$ we can find $K_2, K_3, K_4$, which were already known, and $K_1$, which was unknown.  
\end{proof}

\begin{theorem}\label{th_hash_lioness}
Let $n\geq 1$. Let $\mathcal{A}_{n}$ be an oracle able to find the key of LIONESS given any set of $n$ plaintext-ciphertext pairs $\Set{\left((L_i,R_i),(L'_i,R'_i)\right)}_{1\leq i\leq n}$. Then $\mathcal{A}_n$ is able to solve any equation $Z=H_{K_4}(R')$, knowing $Z$ and $R'$, for any random $R'\in\FF^r$ and any random $K\in\FF^k$.   
\end{theorem}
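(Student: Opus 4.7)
The plan is to mirror the backward-construction strategy used in Theorems~\ref{th_stream_lioness} and \ref{th_hash_lion}. To force the oracle to reveal a $K_4$ satisfying $Z = H_{K_4}(R')$, I will manufacture $n$ plaintext/ciphertext pairs whose ciphertext is of the form $(\overline{L}_i + Z, R')$ for auxiliary values $\overline{L}_i \in \FF^l$. Since the last step of the LIONESS encryption is $L'_i = \overline{L}_i + H_{K_4}(R'_i)$, pinning $R'_i = R'$ and $L'_i = \overline{L}_i + Z$ forces $H_{K_4}(R') = Z$ for any key consistent with those pairs.

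Concretely, pick three arbitrary sub-keys $K_1, K_2, K_3$ and $n$ distinct intermediate values $\overline{L}_1, \ldots, \overline{L}_n \in \FF^l$. Reading the encryption schedule from bottom to top, set
\[
 \overline{R}_i := R' + S(\overline{L}_i + K_3), \qquad L_i := \overline{L}_i + H_{K_2}(\overline{R}_i), \qquad R_i := \overline{R}_i + S(L_i + K_1).
\]
A direct verification shows that encrypting $(L_i, R_i)$ under any LIONESS key whose first three sub-keys are $K_1, K_2, K_3$ produces intermediate values matching the chosen $\overline{R}_i$ and $\overline{L}_i$, and so yields right half $R'_i = R'$ and left half $\overline{L}_i + H_{K_4}(R')$. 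Submitting $\Set{\bigl((L_i, R_i), (\overline{L}_i + Z, R')\bigr)}_{1 \le i \le n}$ to $\mathcal{A}_n$ therefore returns a quadruple $(K_1, K_2, K_3, K_4)$ whose first three entries coincide with our choice and whose fourth entry is a solution to $H_{K_4}(R') = Z$.

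The only real subtlety (there is no genuine technical obstacle) is that $\mathcal{A}_n$ could a priori return a different key quadruple $(K_1^*, K_2^*, K_3^*, K_4^*)$ that is also consistent with the $n$ submitted pairs. This is the same soft point already present in Theorems~\ref{th_hash_bear}, \ref{th_BEAR_stream}, \ref{th_lion_stream} and \ref{th_hash_lion}: by taking $n$ moderately large and $\overline{L}_i$ suitably generic, the key is pinned down with high probability, and consistency still forces $\overline{L}_i + Z = \overline{L}_i^{*} + H_{K_4^*}(R')$ for every $i$, from which $H_{K_4^*}(R') = Z$ follows.
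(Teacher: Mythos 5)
Your construction is essentially identical to the paper's own proof: you fix $K_1,K_2,K_3$, run the decryption equations backward from ciphertexts with right half $R'$, substituting $Z$ for the unknown $H_{K_4}(R')$, and feed the resulting pairs to $\mathcal{A}_n$; the only cosmetic difference is that you parametrize by the intermediate values $\overline{L}_i$ while the paper parametrizes by the ciphertext halves $L'_i=\overline{L}_i+Z$. Your closing remark on possible alternative consistent keys goes beyond what the paper addresses (and its last deduction would strictly need $\overline{L}_i^{*}=\overline{L}_i$), but this does not affect the core argument, which matches the paper's.
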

\begin{proof}
 Let us choose a set $\Set{L_i}_{1\leq i\leq n}\subset \FF^l$ and consider the set of ciphertexts $\Set{(L_i',R')}_{1\leq i\leq n}$. It is possible to generate a set of plaintexts  $\Set{(L_i,R_i)}_{1\leq i\leq n}$ by choosing any sub-keys $K_1,K_2,K_3$ and decrypting: $\overline{L_{i}}=L'_{i}+Z$, $\overline{R_{i}}=R' + S(\overline{L_{i}}+K_{3})$, $L_{i} = \overline{L_{i}} + H_{K_2}(\overline{R_{i}})$, $R_{i}=\overline{R_{i}}+S(L_{i}+K_1)$. Using $\mathcal{A}_n$ we can find $K_1$, $K_2$, $K_3$, which were already known, and $K_4$, which was unknown.  
\end{proof}

\section{Conclusions and further comments}
\label{conc}

Let us consider a keyed hash function with a very weak requirement,
i.e., that it is surjective both
fixing the key and with respect to the keys (see Remark \ref{Hrules}).
Anderson and Biham prove that no single-pair oracle exists for BEAR, 
under the assumption that ``the stream seed is difficult to recover OR the hash function is collision resistant OR the hash preimage is hard
to recover''.
We prove that no multi-pair oracle exists for BEAR
under the assumption that ''hash is key-resistant''.
We also suggest a slight modification of BEAR, BEAR 2, where we can prove
that no multi-pair oracle exists under the assumption that
``hash is key-resistant OR stream is key-resistant''.

The conclusions about key-recovery attacks for LION are quite similar
to those for BEAR. 
Anderson and Biham claim without proof that 
no single-pair oracle exists for LION under the assumption that
``the stream seed is difficult to recover OR the hash function is collision resistant OR the hash preimage is hard to recover''.
However, we have found no direct proof following their outline.
We prove that no single-pair oracle exists for LION under the assumptions that 
the stream seed is difficult to recover.
To prove the same thing with assumptions on the hash function, we need a condition that we call {\em good pairing}. 
Interestingly, this condition follows from the pseudo-random nature of $S$ OR the pseudo-random nature of $H$.
Given the good pairing for granted, we finish to prove their claim, that is,
no single-pair oracle exists for LION under the assumption that
``the hash function is collision resistant OR the hash preimage is hard
to recover''.
As in the case of BEAR, we prove that no multi-pair oracle exists 
for LION under the assumption that
``the stream cipher is key-resistant'', which is equivalent to 
``the stream preimage is hard to recover'' in many pratical situations.
We also suggest a slight modification of LION, LION 2, where we can
prove that no multi-pair oracles exist
under the assumption that ``the hash function is key-resistant OR the stream
cipher is key-resistant''.

As regards key-recovery attacks, LIONESS's virtues are 
the sum of LION's and BEAR's virtues.
So it is possible to prove the non-existence of one-pair oracles 
using the authors' assumptions, 
but we can indeed prove the non-existence of multi-pair oracles 
under only the key-resistance assumption.

We note that an attack by Morin (\cite{CGC-cry-art-Morin96}) has somehow
diminuished the confidence in the robustness of these schemes.
However, the attack succeeds only because its brute force search on the round
function contradicts the
key-resistance of the hash function and of the stream function.
So, whenever $\mathcal{H}$ or $S$ remain key-resistant, both
LION and BEAR are immune to such attacks.

\section*{Acknowledgements}
For their comments and suggestions the authors would like
to thank E.~Bellini, G.~Morgari and M.~Coppola.
The first three authors would like to thank their supervisor (the
fourth author).

This work has been supported by TELSY Elettronica e Telecomunicazioni, 
an Italian company working in Information and Communication
Security.

\bibliography{RefsCGC}

\end{document}